\pdfoutput=1
\RequirePackage{ifpdf}
\ifpdf 
\documentclass[pdftex]{sigma}
\else
\documentclass{sigma}
\fi

\numberwithin{equation}{section}

\def\be{\begin{array}{l}}
\def\ee{\end{array}}
\def\beqa{\begin{eqnarray}}
\def\eeqa{\end{eqnarray}}

\newtheorem{Theorem}{Theorem}[section]
\newtheorem{Corollary}[Theorem]{Corollary}
{ \theoremstyle{definition}
\newtheorem{Remark}[Theorem]{Remark} }

\begin{document}

\allowdisplaybreaks

\newcommand{\arXivNumber}{1803.06819}

\renewcommand{\thefootnote}{}

\renewcommand{\PaperNumber}{106}

\FirstPageHeading

\ShortArticleName{Generalized Hermite Polynomials and Monodromy-Free Schr\"odinger Operators}

\ArticleName{Generalized Hermite Polynomials\\ and Monodromy-Free Schr\"odinger Operators\footnote{This paper is a~contribution to the Special Issue on Painlev\'e Equations and Applications in Memory of Andrei Kapaev. The full collection is available at \href{https://www.emis.de/journals/SIGMA/Kapaev.html}{https://www.emis.de/journals/SIGMA/Kapaev.html}}}

\Author{Victor Yu.~NOVOKSHENOV}

\AuthorNameForHeading{V.Yu.~Novokshenov}

\Address{Institute of Mathematics, Russian Academy of Sciences,\\ 112 Chernyshevsky Str., 450008, Ufa, Russia}
\Email{\href{mailto:novik53@mail.ru}{novik53@mail.ru}}
\URLaddress{\url{http://matem.anrb.ru/ru/novokshenovvy}}

\ArticleDates{Received March 20, 2018, in final form September 20, 2018; Published online September 30, 2018}

\Abstract{The paper gives a review of recent progress in the classification of mo\-no\-dro\-my-free Schr\"odinger ope\-ra\-tors with rational potentials. We concentrate on a class of potentials constituted by generalized Hermite polynomials. These polynomials defined as Wronskians of classic Hermite polynomials appear in a number of mathematical physics problems as well as in the theory of random matrices and 1D SUSY quantum mechanics. Being quadratic at infinity, those potentials demonstrate localized oscillatory behavior near the origin. We derive an explicit condition of non-singularity of the corresponding potentials and estimate a localization range with respect to indices of polynomials and distribution of their zeros in the complex plane. It turns out that 1D SUSY quantum non-singular potentials come as a~dressing of the harmonic oscillator by polynomial Heisenberg algebra ladder operators. To this end, all ge\-ne\-ralized Hermite polynomials are produced by appropriate periodic closure of this algebra which leads to rational solutions of the Painlev\'e~IV equation. We discuss the structure of the discrete spectrum of Schr\"odinger operators and its link to the monodromy-free condition.}

\Keywords{generalized Hermite polynomials; monodromy-free Schr\"odinger operator; Painle\-v\'e~IV equation; meromorphic solutions; distribution of zeros; 1D SUSY quantum mechanics}

\Classification{30D35; 30E10; 33C75; 34M35; 34M55; 34M60}

\rightline{\it In memory of my friend and colleague Andrei Kapaev}

\renewcommand{\thefootnote}{\arabic{footnote}}
\setcounter{footnote}{0}

\section{Introduction}\label{sect0}

The integrability property of Painlev\'e equations reveals a number of applications of their solutions. Besides traditional self-similar modes in nonlinear PDE's of mathematical physics they provide new construction material for integrable quantum mechanics and spectral theory. In this paper, we give a brief review of recent achievements in these applications of rational solutions of the fourth Painlev\'e equation (PIV). We trace how they come from monodromy-free potentials of the Schr\"odinger equation and from supersymmetric dressing of the harmonic potential in one-dimensional quantum mechanics.

Another ingredients of these interconnections are the generalized Hermite polynomials which build all rational solutions of PIV. Their appearance in multiple applications is explained by the determinant representation of these polynomials. Actually, it can be set as a definition in terms of classical Hermite polynomials. Namely, generalized Hermite polynomials (GHP) $H_{m,n}(z)$ are defined as follows~\cite{clarkson,NY}
\begin{gather*}
H_{m,n}(z)=\det\left( P_{n-i+j}(z)\right)_{i,j=1}^{m},
\end{gather*}
 where
\begin{gather*}
P_s(z)= \sum_{i+2j=s} \frac{1}{6^j i! j!} z^i,
\end{gather*}
or, equivalently, as Wronskians of classical Hermite polynomials
\begin{gather}\label{ghp}
H_{m,n}(z)= c_{m,n}{\mathcal W}\left(H_m(z), H_{m+1}(z), \ldots, H_{{{m+n-1}}}(z)\right),
\end{gather}
where $H_n(z)=(-1)^n {\rm e}^{z^2}\frac{{\rm d}^n}{{\rm d}z^n} {\rm e}^{-z^2}$ and $c_{m,n}$ are normalization constants.

Like classical orthogonal polynomials $H_{m,n}$ have a number of useful properties. For example, they constitute recurrence coefficients for orthogonal polynomials $p_n(x)$ with weight $w(x,z,m)=(x-z)^m \exp\big({-}x^2\big)$ \cite{cf,deift}
\begin{gather*}
xp_n(x)=p_{n+1}(x)+a_n(z,m)p_n(x)+b_n(z,m)p_{n-1}(x),
\end{gather*}
where
\begin{gather*}
a_n(z,m)=-\frac 12\frac{{\rm d}}{{\rm d}z}\ln\frac{H_{n+1,m}}{H_{n,m}}, \qquad b_n(z,m)=\frac{nH_{n+1,m}H_{n-1,m}}{2H^2_{n,m}}.
\end{gather*}
Another property is a formula for rational solutions to the Painlev\'e IV equation
\begin{gather}\label{p4sol}
v(z)={-2z+}\frac{{\rm d}}{{\rm d}z}\ln\frac{H_{{m,n+1}}(z)}{H_{{m+1,n}}(z)}.
\end{gather}
In this case PIV equation
\begin{gather}
v''=\frac{(v')^2}{2v}+\frac 32 v^3 + 4zv^2+2\big(z^2-a\big)v+{\frac{b}{v}}, \label{p4}
\end{gather}
has integer coefficients
\begin{gather*}
a = n-m, \qquad b = -2(m+n+1)^2,
\end{gather*}
where $m$ and $n$ are the indices of the corresponding polynomials~\cite{luk}.

 The solutions \eqref{p4sol} have a specific structure of poles in the complex plane. It can be thought of as an equilibrium state of Coulomb charged particles in an external field. Indeed, any pole of a rational solution to PIV has residue equal to $c_j=+1$ or $c_j=-1$ (see Theorem~\ref{th1}). The poles~$z_j$ can be interpreted as positive and negative charges interacting by the logarithmic potential and influenced by the external quadratic potential
 \begin{gather*}
 U(z_1, z_2, \ldots, z_n) = \sum\limits_{j=1}^nc_jz_j^2+ \sum\limits_{j\not= k}^nc_jc_k\log(z_j-z_k)^2.
 \end{gather*}
 The equilibrium condition provides the {\em generalized Stiltjes relation} \cite{V}
 \begin{gather*}
 \sum\limits_{j\not= k}\frac{c_j}{z_k-z_j}+z_k=0, \qquad k=1,2,\ldots.
 \end{gather*}
 Here each pole coincides with zero of the related polynomial. The distribution of poles for large orders of polynomials has been studied since classical works by T.~Stiltjes~\cite{st} and M.~Plancherel and W.~Rotach~\cite{pr}. This question was discussed recently in applications to dynamics of Coulomb log-gases~\cite{for} and approximations by rational functions in logarithmic potential theory~\cite{saff}.

Since the PIV equation \eqref{p4} is integrable in the sense of soliton theory \cite{fikn,kapaev1}, all its rational solutions have been found and labeled by recursions of B\"acklund transformations~\cite[Chapter~6]{fikn}. These recursions can be rewritten as dressing chains of the Lax operator with some periodic closures. As a by-product this gives a set of Schr\"odinger operators $L$ formed by GHP
\begin{gather}
 L = -\frac{{\rm d}^2}{{\rm d}x^2} + u(x), \qquad
 u(x) = f'(x) +f^2(x),\qquad f(x) = -x+\frac{{\rm d}}{{\rm d}z}\ln\frac{H_{m,n}(z)}{H_{m,n+1}(z)}.\label{mf}
\end{gather}
These operators are monodromy-free (see Section~\ref{sect1}) and the discrete spectrum of each is an arithmetic sequence with a finite gap (Theorem~\ref{th3}). Moreover, all potentials~\eqref{mf} are non-singular on the real line for odd~$n$. This is due to Theorem~\ref{th2} below which proves $\operatorname{Res} u(z) =0$. In turn, this follows from the distribution of zeros of GHP.

One can mention also a recent application of GHP in matrix models of statistical physics. Consider a degenerate {\em Gaussian unitary ensemble} where eigenvalues $\lambda_k$ are fixed $k=1,2, \ldots, n$, and $\lambda_{n+1}=z$ has $m$-fold multiplicity. Then the partition function of the ensemble has the form~\cite{cf}
\begin{gather}\label{gue}
 D_n(z)=\frac{1}{n!}\int^\infty_\infty \cdots \int^\infty_\infty \prod_{1\le i<j\le n} (\lambda_i - \lambda_j)^2\prod\limits_{k=1}^n (\lambda_k -z)^m {\rm e}^{-\lambda_k^2}{\rm d}\lambda_k,
\end{gather}
where
\begin{gather*}
 D_n(z) = A_{m,n}H_{m,n}(c z), \qquad c = {\rm i}\sqrt{\frac 23}, \qquad A_{m,n}= {\rm const}.
\end{gather*}
Note that formula \eqref{gue} is proved with the help of dressing chains and ladder operators discussed below in Section~\ref{sect4}.

\begin{figure}[t]\centering
\includegraphics[width=70mm]{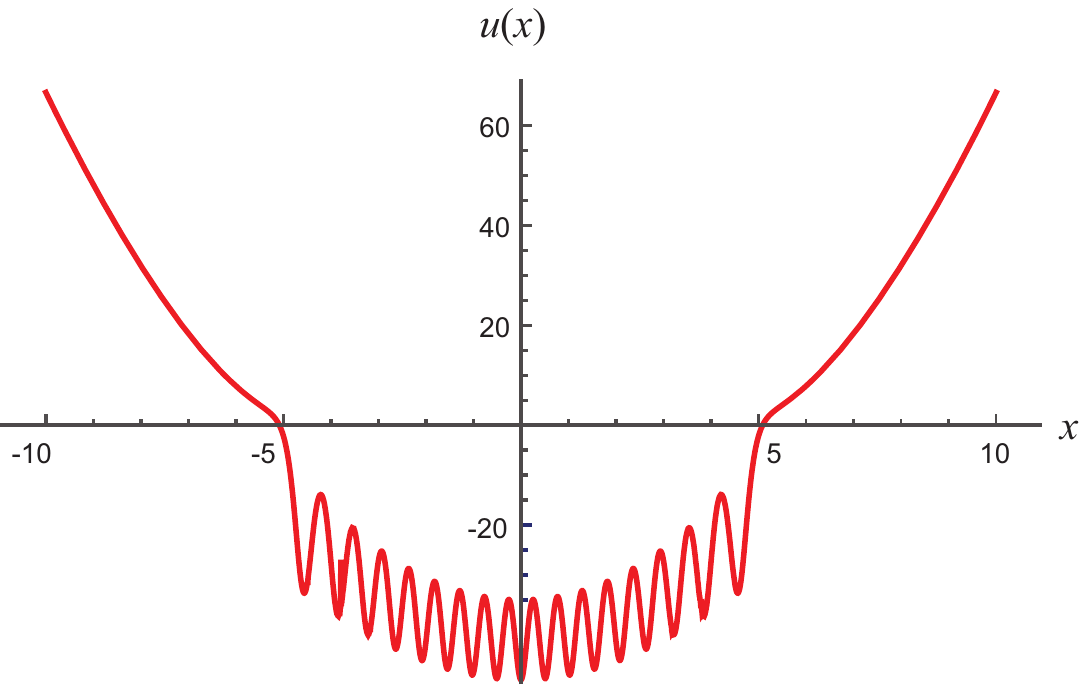} \qquad \includegraphics[width=70mm]{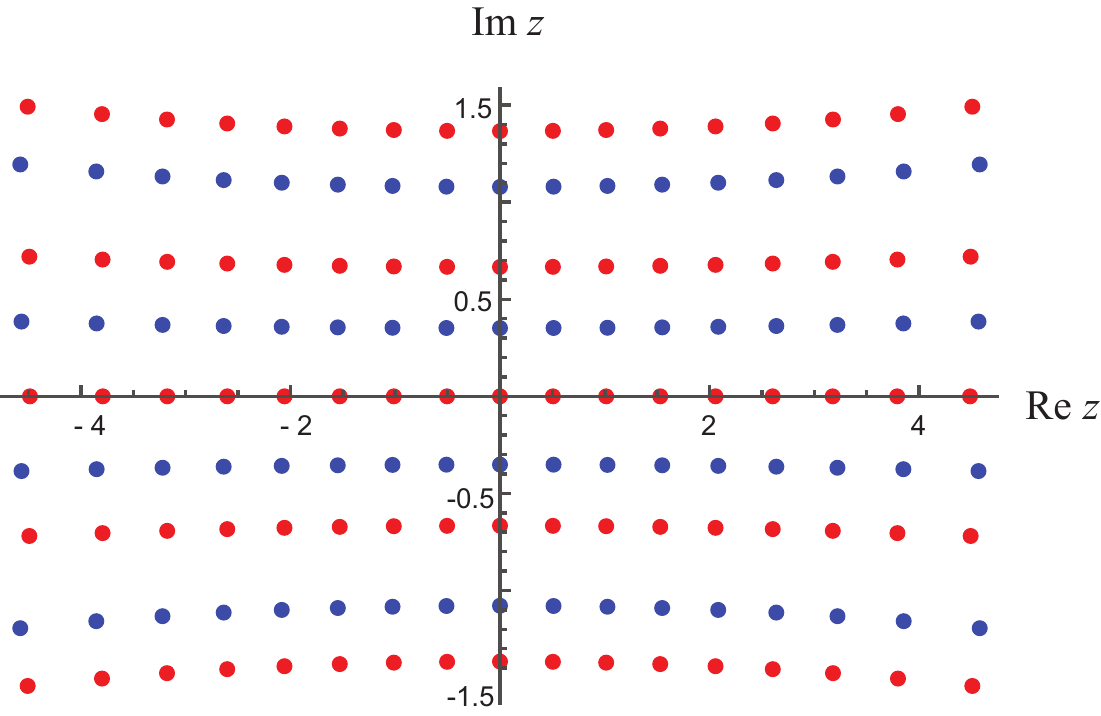}
\caption{Monodromy-free potential~\eqref{mf} $u(x)$, generated by polynomials $H_{m,n}(z)$ and $H_{m,n+1}(z)$ for $m=17$, $n=4$ (left) and zeros of $H_{m,n}$ (blue) and $H_{m,n+1}$ (red) in the complex plane (right).} \label{fig1}
\end{figure}

\section{Monodromy-free potentials and dressing chain}\label{sect1}

A Schr\"odinger operator
\begin{gather}\label{sch}
L = -\frac{{\rm d}^2}{{\rm d}z^2} + u(z)
\end{gather}
with meromorphic potential is called {\em monodromy-free} if all solutions of the equation
\begin{gather*}
L\psi = -\psi'' + u\psi= \lambda\psi
\end{gather*}
are meromorphic in the whole complex plane $z\in\mathbb C$ for all $\lambda$. In other words, monodromy of the equation \eqref{sch} in the complex plane is trivial for all $\lambda$.

The problem of classification of monodromy-free Schr\"odinger operators is traditional in spectral theory (see~\cite{DGr,MKTrub}). It became even more important in soliton theory where monodromy-free potentials form a class of soliton solutions to nonlinear equations for which a~Schr\"o\-din\-ger operator enters the Lax pair. In a case of potentials decreasing at infinity on the real line the spectral theory is well understood (see~\cite{thsol}). In the framework of soliton theory there were found new classes of monodromy-free potentials such as finite-gap ones and rational potentials with quadratic growth at infinity. Here the latter class will be studied in detail in the special case of rational solutions of the PIV equation.

According to \cite{VS}, the Schr\"odinger operator \eqref{sch} $L_j$ with potential $u_j(z)$ is factorized in the form
\begin{gather}\label{sch1}
 L_j = -\left(\frac{{\rm d}}{{\rm d}z} + f_j(z)\right)\left(\frac{{\rm d}}{{\rm d}z} - f_j(z)\right),
\end{gather}
where the function $f_j(z)$ satisfies the Riccati equation ($' ={\rm d}/{\rm d}z$)
 \begin{gather}\label{ric}
f_j' + f_j^2 = u_j.
 \end{gather}
The Darboux transformation
 \begin{gather*}
 L_j \ \mapsto \ L_{j+1}= -\left(\frac{{\rm d}}{{\rm d}z} - f_j(z)\right)\left(\frac{{\rm d}}{{\rm d}z} + f_j(z)\right)+\alpha_j,
\end{gather*}
produces the new potential
\begin{gather*}
u_{j+1} = f'_{j+1} + f^2_{j+1} = -f_j'+ f_j^2 +\alpha_j = u_j -2f_j' +\alpha_j.
 \end{gather*}
This gives rise to the {\em dressing chain} equations \cite{VS}
\begin{gather}\label{drchain}
f'_{j+1} + f'_j = f_j^2 - f_{j+1}^2 + \alpha_j, \qquad j = 1, 2, \ldots, n, \ldots,
\end{gather}
where $\alpha_i$ are arbitrary constants.

In other words, equations \eqref{drchain} are equivalent to the relations between Schr\"o\-din\-ger operators
\begin{gather*}
-\left(\frac{{\rm d}}{{\rm d}z} - f_j(z)\right)\left(\frac{{\rm d}}{{\rm d}z} +f_j(z)\right) +\alpha_j =-\left(\frac{{\rm d}}{{\rm d}z} + f_{j+1}(z)\right)\left(\frac{{\rm d}}{{\rm d}z} - f_{j+1}(z)\right).
\end{gather*}
This property plays a key role in the calculation of spectrum of the monodromy-free potential~\eqref{ric} $u(x) = f_1'(x) +f^2_1(x) $.

Following \cite{adler,bur80,VS}, consider the following periodic closure of the dressing chain~\eqref{drchain}
\begin{gather*}
f_j=f_{j+N}, \qquad \alpha_j = \alpha_{j+N}.
\end{gather*}
For $N=3$ the infinite chain \eqref{drchain} reduces to the second-order ODE written in symmetric form (sPIV) found in \cite{adler,bur80}
\begin{gather}
\phi'_1 + \phi_1(\phi_2 - \phi_3) - \alpha_1=0, \qquad
\phi'_2 + \phi_2(\phi_3 - \phi_1) - \alpha_2=0,\nonumber\\
\phi'_3 + \phi_3(\phi_1 - \phi_2) - \alpha_3=0,\label{adl}
\end{gather}
where
\begin{gather*}
\phi_1 = f_1 + f_2, \qquad \phi_2 = f_2 + f_3, \qquad \phi_3 = f_3 + f_1,
\end{gather*}
and
\begin{gather*}
\phi_1 + \phi_2 + \phi_3 = -2z, \qquad \alpha_1+ \alpha_2 + \alpha_3 = -2.
\end{gather*}
The system \eqref{adl}, in turn, is equivalent to the PIV equation \cite{adler}
\begin{gather*}
 v''=\frac{(v')^2}{2v}+\frac 32 v^3 + 4zv^2+2\big(z^2-a\big)v+{\frac{b}{v}},
\end{gather*}
with
\begin{gather*}
v = \phi_1, \qquad a = \frac 12(\alpha_{{2}} - \alpha_{{3}}), \qquad b = - \frac 12 \alpha_1^{{2}}.
\end{gather*}

The class of rational solutions to the system sPIV has simple ``seed solutions''
\begin{gather}\label{hier}
\phi_1 = -\frac 1z, \qquad \phi_2 = \frac 1z, \qquad \phi_3 = -2z,
\end{gather}
with parameters $\alpha_1=\alpha_2={-2}$ and $\alpha_3= {0}$. They correspond to the ``$-1/z$ hierarchy'' and the ``$-2z$ hierarchy'' of PIV rational solutions first found by N.~Lukashevich~\cite{luk}. He proved that there are no other PIV rational solutions with leading terms $-1/z$ or $-2z$ at infinity. Note that there is also a~``$-\frac 23 z$'' hierarchy of PIV generated by Okamoto polynomials~\cite{NY} which we will not consider here.

The ``$-1/z$'' and ``$-2z$'' hierarchies correspond to rational solutions of the sPIV equa\-tion~\eqref{adl} formed by generalized Hermite polynomials $H_{m,n}$ \cite{clarkson}
\begin{gather}
 \phi_1(z) = \frac{{\rm d}}{{\rm d}z} \ln \frac{H_{m+1,n}(z)}{H_{m,n}(z)}, \qquad \phi_2(z) = \frac{{\rm d}}{{\rm d}z} \ln \frac{H_{m,n}(z)}{H_{m,n+1}(z)},\nonumber\\
\phi_3(z) = -2z+\frac{{\rm d}}{{\rm d}z} \ln \frac{H_{m,n+1}(z)}{H_{m+1,n}(z)}, \label{3p4}
\end{gather}
where $\alpha_1 = -2n$, $\alpha_2 = {2m +2n}$, $\alpha_3 = {-2m -2}$.

Due to the symmetry of the sPIV system \eqref{adl} and the periodic relations $f_j=f_{j+3}$ the first dressing chain component takes the form
\begin{gather}
 f^{(1)}_1(z) = -z - \frac{{\rm d}}{{\rm d}z} \ln \frac{H_{m,n}(z)}{H_{m,n+1}(z)}, \qquad f^{(2)}_1(z) = z + \frac{{\rm d}}{{\rm d}z} \ln \frac{H_{m+1,n}(z)}{H_{m,n+1}(z)},\nonumber\\
f^{(3)}_1(z) = -z {+} \frac{{\rm d}}{{\rm d}z}\ln \frac{H_{m,n}(z)}{H_{m+1,n}(z)}. \label{f1}
\end{gather}

Note that formulas \eqref{f1} can be derived also from results of A.~Oblomkov \cite{oblom}. He proved that any monodromy-free potential of a~Schr\"odinger operator \eqref{sch1} with
\begin{gather*}
f_j=\sum\limits_{k=1}^N\frac{c_k}{z-z_k} -z
\end{gather*}
is quadratic at infinity and has the form
\begin{gather}\label{obl}
u(z) = z^2-2\frac{{\rm d}^2}{{\rm d}z^2}\ln\mathcal{W} (H_m(z), H_{m+1}(z), \ldots, H_{m+n{-1}}(z) ),
\end{gather}
where $\mathcal{W}$ is the Wronskian and $H_k$ are classical Hermite polynomials. This form of the potential can easily be derived from the definition $u = f_1^\prime +f_1^2$ by using relations~\eqref{f1}
 \begin{subequations}\label{fla-flc}
 \begin{gather}
u(z) = z^2 - 2\frac{{\rm d}^2}{{\rm d}z^2} \ln H_{m+1,n}(z) +2n-1, \label{f1a} \\
u(z) = z^2 - 2\frac{{\rm d}^2}{{\rm d}z^2} \ln H_{m,n+1}(z) + 2n - 2m +1, \label{f1b}\\
u(z) = z^2 - 2\frac{{\rm d}^2}{{\rm d}z^2} \ln H_{m,n}(z) -2m-1. \label{f1c}
\end{gather}
 \end{subequations}
Taking into account the definition \eqref{ghp} we come to the representation \eqref{obl} up to a constant term.

\section{Non-singular potentials on the real line}\label{sect3}

Spectral theory of Schr\"odinger operators \eqref{sch1} usually supposes a potential $u(x)$ being non-singular, i.e., belonging to some functional space like~$L_2(\mathbb{R})$. This is true especially for applications like quantum mechanics as we will discuss in Section~\ref{sect4}.

We begin with the structure of poles of rational PIV solutions \eqref{f1}. In general, all solutions of~\eqref{p4} are meromorphic functions. They are described by the following

\begin{Theorem}[\cite{V}]\label{th1} Any rational solution to equation \eqref{p4} has the form
\begin{gather}\label{pol_p4}
v(z)=\epsilon z+ \sum\limits_j\frac{c_j}{z-z_j}, \qquad \epsilon=0, -\frac 23, -2, \qquad c_j=\pm 1, \qquad j=1,2, \ldots,
\end{gather}
and the {\em generalized Stieltjes relation} is true
\begin{gather}\label{stl}
\sum\limits_{j\not= k}\frac{c_j}{z_k-z_j}+(\epsilon +1)z_k=0, \qquad k=1,2,\ldots.
\end{gather}
\end{Theorem}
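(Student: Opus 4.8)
The plan is to establish the two assertions — the partial-fraction form \eqref{pol_p4} with residues $c_j=\pm1$, and the generalized Stieltjes relation \eqref{stl} — by a local Laurent-expansion analysis of the PIV equation \eqref{p4} at each pole, combined with a behaviour-at-infinity argument. First I would record that any rational solution $v(z)$ is a quotient of polynomials whose only singularities are poles; the task is to pin down the possible pole orders, the admissible residues, and the linear-in-$z$ part at infinity.

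The core step is a local analysis at a finite pole $z=z_j$. I would posit a Laurent expansion $v(z)=c/(z-z_j)^p+\cdots$ with leading exponent $p\ge1$ and insert it into \eqref{p4}. Balancing the most singular terms of $v''$, $(v')^2/(2v)$, $\tfrac32 v^3$, and $b/v$ forces $p=1$: a higher-order pole would make $\tfrac32 v^3$ (order $3p$) strictly dominate all other terms, which cannot be cancelled, so $p=1$ is the only consistent choice. With $p=1$ and leading coefficient $c$, the dominant balance among the order-$3$ terms near $z_j$ reads $2c=\tfrac32 c^3$ from $v''\sim 2c/(z-z_j)^3$ and $\tfrac32 v^3\sim\tfrac32 c^3/(z-z_j)^3$, together with the contribution of $(v')^2/(2v)\sim (c/2)/(z-z_j)^3$ and the $b/v$ term which is only order $1$; collecting these gives a cubic relation whose admissible nonzero roots are $c=\pm1$. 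This yields the residue condition $c_j=\pm1$ and, since the poles are all simple, the sum $\sum_j c_j/(z-z_j)$ exhausts the principal parts.

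Next I would treat the point at infinity. Writing $v(z)=\epsilon z+O(1)$ and substituting the leading term into \eqref{p4}, the balance of the terms that grow fastest in $z$ — namely $\tfrac32 v^3$, $4zv^2$, and $2z^2 v$ — gives an algebraic equation $\tfrac32\epsilon^3+4\epsilon^2+2\epsilon=0$, i.e.\ $\epsilon(\tfrac32\epsilon^2+4\epsilon+2)=0$, whose roots are exactly $\epsilon=0,-\tfrac23,-2$. This fixes the linear part and shows that $v(z)-\epsilon z$ decays, so the representation \eqref{pol_p4} holds. Finally, the Stieltjes relation \eqref{stl} follows from the \emph{next} order in the Laurent expansion at each $z_k$: after fixing $c_k=\pm1$, I would compute the coefficient of $(z-z_k)^0$ (or equivalently the regular part evaluated at $z_k$) in \eqref{p4} using $v(z)=c_k/(z-z_k)+\sum_{j\ne k}c_j/(z-z_j)+\epsilon z+\cdots$. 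The cancellation of the subleading singular terms equates the interaction sum $\sum_{j\ne k}c_j/(z_k-z_j)$ with the external-field contribution $(\epsilon+1)z_k$, giving \eqref{stl}.

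The main obstacle I anticipate is the second-order local computation that produces \eqref{stl}: the leading balance determining $p=1$ and $c_j=\pm1$ is routine, but extracting the \emph{equilibrium} relation requires carefully tracking how the cross terms in $(v')^2/(2v)$ and $\tfrac32 v^3$ couple the residue at $z_k$ to all the other poles and to the linear part $\epsilon z$, without dropping the $O(1)$ contributions from $4zv^2$ and $2z^2v$ evaluated near $z_k$. Keeping these contributions consistently — and verifying that the $b/v$ term, being regular at $z_k$, does not interfere at this order — is where the bookkeeping must be done with care.
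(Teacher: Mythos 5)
Your proposal follows essentially the same route as the paper's proof: a leading-order Laurent balance at each pole forcing simple poles with residues $c_j=\pm 1$, a balance of the cubic, quadratic and linear terms at infinity giving $\epsilon\in\{0,-\frac{2}{3},-2\}$, and the subleading balance at order $(z-z_k)^{-2}$ fixing the constant term of the local expansion (equivalently the regular part at $z_k$) and hence yielding \eqref{stl}. The computations you outline are correct and match the paper's (very terse) argument.
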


\begin{proof} Take a Laurent series near a pole $z=z_j$
\begin{gather}\label{pol_ser}
 v(z)=\sum\limits_{k=-l}^{\infty}c_k (z-z_j)^k, \qquad z\to z_j.
\end{gather}
Balancing the leading terms of the series in equation (\ref{p4}) yield $l=1$, $c_{-1}^2=1$. A~similar comparison at infinity proves that rational solution $u(z)$ has at most linear growth $u(z) =\epsilon z +O(1)$ as $z\to\infty$ with $\epsilon = 0, -2/3, -2$. Expand a rational solution~$u(z)$ into simple frac\-tions~(\ref{pol_p4}) and put it into equation (\ref{p4}) looking for terms of order $O(z-z_k)^{-2}$ as $z\to z_k$. Balancing these terms gives the relations (\ref{stl}) for any pole $z_k$.
\end{proof}

\begin{Corollary}\label{cor1} For any solution $v$ of the PIV equation \eqref{p4} the residues of the function $(z + v(z))^2$ are zero at any pole $z=z_k$
\begin{gather*}
\operatorname{Res} (z + v(z))^2 = 0.
\end{gather*}
\end{Corollary}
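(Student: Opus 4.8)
The plan is to reduce the statement to a purely local computation at each pole and then read off the residue from the Laurent data supplied by Theorem~\ref{th1}. Fix a pole $z=z_k$ of $v$. By Theorem~\ref{th1} the solution has a simple pole there with residue $c_k=\pm1$, so near $z_k$ one may write
\begin{gather*}
v(z)=\frac{c_k}{z-z_k}+a_0+O(z-z_k),
\end{gather*}
where $a_0$ is the constant term of the Laurent expansion. Writing $z=z_k+(z-z_k)$ and adding, I would record that
\begin{gather*}
z+v(z)=\frac{c_k}{z-z_k}+(z_k+a_0)+O(z-z_k),
\end{gather*}
so that squaring produces a double pole with leading coefficient $c_k^2$ and a simple-pole coefficient $2c_k(z_k+a_0)$. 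Hence
\begin{gather*}
\operatorname{Res}_{z=z_k}(z+v(z))^2=2c_k(z_k+a_0),
\end{gather*}
and the whole corollary collapses to the single identity $a_0=-z_k$.

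To identify $a_0$ I would use the explicit pole decomposition from Theorem~\ref{th1}. Expanding $v(z)=\epsilon z+\sum_j c_j/(z-z_j)$ about $z_k$ separates the singular term from the regular part, and matching constant terms gives $a_0=\epsilon z_k+\sum_{j\neq k}c_j/(z_k-z_j)$. At this point the generalized Stieltjes relation~\eqref{stl} is exactly what is needed: it yields $\sum_{j\neq k}c_j/(z_k-z_j)=-(\epsilon+1)z_k$, whence $a_0=\epsilon z_k-(\epsilon+1)z_k=-z_k$. Substituting back makes the simple-pole coefficient vanish, which proves $\operatorname{Res}(z+v)^2=0$ at every pole.

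The only delicate point is the evaluation of $a_0$; everything else is bookkeeping. For a general meromorphic (not necessarily rational) solution, where \eqref{stl} is unavailable, I would instead extract $a_0=-z_k$ directly from the equation: insert the Laurent series into \eqref{p4} cleared of denominators, in the form $2vv''=(v')^2+3v^4+8zv^3+4(z^2-a)v^2+2b$, and balance the coefficients of $(z-z_k)^{-3}$. The leading $(z-z_k)^{-4}$ order merely reconfirms $c_k^2=1$, while the next order produces a linear relation between $a_0$ and $z_k$ forcing $a_0=-z_k$, after which the residue computation is identical. The main obstacle is thus purely computational: organizing the Laurent expansions of $v^4$, $zv^3$ and $(v')^2$ carefully enough that this subleading balance is reliable.
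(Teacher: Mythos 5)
Your argument is correct, but your primary route differs from the paper's. The paper's proof is purely local: it asserts that the constant term $c_0$ of the Laurent series \eqref{pol_ser} equals $-z_j$ (``one easily derives'' this by substituting the series into \eqref{p4} and balancing the $(z-z_j)^{-3}$ terms of $2vv''=(v')^2+3v^4+8zv^3+4\big(z^2-a\big)v^2+2b$, which gives $4c_0c_{-1}=12c_{-1}^3c_0+8z_jc_{-1}^3$ and hence $c_0=-z_j$ once $c_{-1}^2=1$), after which $z+v(z)$ has vanishing constant term and the residue of its square is zero. You instead identify $a_0$ globally, from the partial-fraction form \eqref{pol_p4} together with the generalized Stieltjes relation \eqref{stl}; that computation is correct, but it is available only for \emph{rational} solutions, since Theorem~\ref{th1} establishes \eqref{pol_p4} and \eqref{stl} only in that case, whereas Corollary~\ref{cor1} is stated for \emph{any} solution of \eqref{p4}. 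To your credit, you notice exactly this limitation and sketch the local $(z-z_k)^{-3}$ balance as the fallback for general meromorphic solutions --- that fallback is precisely the paper's proof, and the balance does close as you predict. So the two approaches are complementary: yours makes visible the pleasant fact that for rational solutions the identity $a_0=-z_k$ is equivalent to the Stieltjes equilibrium condition, while the paper's local argument is shorter and covers all solutions at once. If you promote your ``fallback'' computation to the main argument and keep the Stieltjes derivation as a remark on the rational case, you recover the paper's proof in full generality.
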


\begin{proof} From the Laurent series \eqref{pol_ser} one easily derives $c_0 = -z_j$. This yields the similar asymptotics for $z+v(z)$
\begin{gather*}
z+v(z) = \frac{c_{-1}}{z-z_j} + (c_0 + z_j) + (c_1+1)(z-z_j) + \cdots \\
\hphantom{z+v(z)}{} = \frac{c_{-1}}{z-z_j} + (c_1+1)(z-z_j) + \cdots.\tag*{\qed}
\end{gather*}\renewcommand{\qed}{}
\end{proof}

Each pole of the functions \eqref{f1} comes from a zero of a~GHP $H_{m,n}$, $H_{m,n+1}$ or $H_{m+1,n}$. The polynomials satisfy the recurrence
relations~\cite{clarkson}
\begin{gather}
2m H_{m+1,n}H_{m-1,n} =H_{m,n}H^{\prime\prime}_{m,n} - (H^\prime_{m,n})^2 + 2mH^2_{m,n},\nonumber\\
2n H_{m,n+1}H_{m,n-1} = - H_{m,n}H^{\prime\prime}_{m,n} + (H^\prime_{m,n})^2+ 2n H^2_{m,n},\label{recur}
\end{gather}
with initial conditions
\begin{gather*}
H_{0,0} = H_{0,1} = H_{1,0} = 1, \qquad H_{1,1}= 2z.
\end{gather*}
One can easily prove by induction that solutions of the system~\eqref{recur} are polynomials and each~$H_{m,n}$ has $mn$ simple zeros. Due to the obvious symmetries{\samepage
\begin{gather*}
H_{m,n}(-z) = (-1)^{mn}H_{m,n}(z), \qquad H_{m,n}({\rm i}z) = {\rm i}^{mn}H_{{n,m}}(z),
\end{gather*}
all zeros form a~symmetric pattern with respect to real and imaginary axes.}

Note that $H_{m,1} = H_m$, where $H_m$ is the classical Hermite polynomial of $m$-th order, $H_m(z)=(-1)^m {\rm e}^{z^2}\frac{{\rm d}^m}{{\rm d}z^m} {\rm e}^{-z^2}$. This means that all zeros of $H_{m,1}$ are on the real line. However, all poly\-no\-mials~$H_{m,n}$ with even $n$ do not have any real-valued zeroes. This follows from the theorem of V.E.~Adler
 \begin{Theorem}[\cite{adler2}] \label{th5}
For $x\in\mathbb{R}$ all Wronskians ${\mathcal W} (H_{m_1}(x), H_{m_2}(x), \ldots, H_{m_n}(x) ) \not= 0$ if $m_1 < m_2 < \dots < m_n$ and $n$ is even.
\end{Theorem}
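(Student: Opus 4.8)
The plan is to transfer the problem to the Hermite functions $\psi_k(x)={\rm e}^{-x^2/2}H_k(x)$, the $L_2(\mathbb R)$ eigenfunctions of the harmonic oscillator $-{\rm d}^2/{\rm d}x^2+x^2$ with eigenvalue $2k+1$. Since every column of the Wronskian matrix carries the common nonvanishing factor ${\rm e}^{-x^2/2}$, one has
\[
\mathcal W(H_{m_1},\dots,H_{m_n})(x)={\rm e}^{nx^2/2}\,\mathcal W(\psi_{m_1},\dots,\psi_{m_n})(x),
\]
so the two Wronskians share the same real zeros and it suffices to prove $W(x):=\mathcal W(\psi_{m_1},\dots,\psi_{m_n})(x)\neq0$ on $\mathbb R$. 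I would then use the classical fact that $W(x_0)=0$ at a real point $x_0$ if and only if some nontrivial real combination $\phi=\sum_{j=1}^nc_j\psi_{m_j}$ vanishes together with its first $n-1$ derivatives at $x_0$, i.e.\ has a real zero of multiplicity at least $n$. The statement is thus equivalent to forbidding order-$n$ real zeros of linear combinations of the chosen eigenfunctions.

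To obtain an explicit handle valid for \emph{arbitrary} increasing indices I would use that each $\psi_k$ is, up to a constant, the Fourier transform of $t^k{\rm e}^{-t^2/2}$, so that $\psi_k(x)=\kappa_k\int_{\mathbb R}t^k{\rm e}^{-t^2/2}{\rm e}^{{\rm i}xt}\,{\rm d}t$. Differentiation in $x$ brings down powers of $t$, so expanding the Wronskian determinant by multilinearity in its columns and applying the Andr\'eief (Heine) identity turns it into an $n$-fold Gaussian integral. Using $\det(t_l^{\,i-1})=\Delta(t)=\prod_{p<q}(t_q-t_p)$ and the bialternant formula $\det(t_l^{\,m_j})=\Delta(t)\,s_\lambda(t)$, where $\lambda$ is the partition fixed by $\lambda_{n+1-j}=m_j-(j-1)$, then completing the square ${\rm e}^{{\rm i}xt-t^2/2}={\rm e}^{-x^2/2}{\rm e}^{-(t-{\rm i}x)^2/2}$ and shifting the contour $t_k\mapsto t_k+{\rm i}x$ (legitimate by analyticity and Gaussian decay), I obtain
\[
\mathcal W(H_{m_1},\dots,H_{m_n})(x)=C\int_{\mathbb R^n}\Delta(t)^2\,s_\lambda(t_1+{\rm i}x,\dots,t_n+{\rm i}x)\prod_{k=1}^n{\rm e}^{-t_k^2/2}\,{\rm d}t_k,
\]
with $C\neq0$ and $\Delta$ shift-invariant. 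Since the measure $\Delta(t)^2\prod{\rm e}^{-t_k^2/2}{\rm d}t$ is positive and symmetric, non-vanishing reduces to the sign behaviour of the Schur factor. For \emph{consecutive} indices $m_j=m+j-1$ the partition is rectangular, $s_\lambda(y)=(y_1\cdots y_n)^m$, the Schur factor becomes $\prod_k(t_k+{\rm i}x)^m$, and I would show the integral is of one sign for all real $x$, with the evenness of $n$ fixing that sign; this proves $H_{m,n}$ has no real zeros for even $n$, as used above for the potentials~\eqref{mf}.

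For \emph{gapped} indices the picture genuinely changes: $s_\lambda$ is then a true Schur polynomial, which is not sign-definite on $\mathbb R^n$, so the positivity route no longer closes and evenness of $n$ cannot be expected to settle the matter by itself. To treat the general increasing sequence I would instead control the nodes of $W$ through the Darboux--Crum construction: when $W$ is nonvanishing, Crum's formulas present $-{\rm d}^2/{\rm d}x^2+x^2-2(\ln W)''$ as the oscillator with the levels $2m_1+1,\dots,2m_n+1$ deleted, and the real zeros of $W$ are exactly the singular points of this transformed potential. I would track these nodes by restoring the deleted levels and invoking the interlacing of the zeros of the $\psi_k$, feeding the parity of $n$ in through the behaviour of $\mathcal W(H)$ at $x\to\pm\infty$: its leading coefficient is a positive Vandermonde in the $m_j$ and its degree is $\sum_j m_j-\frac{n(n-1)}{2}$, so a nodeless polynomial must keep one sign at both ends. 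The step I expect to be the main obstacle is precisely this bookkeeping for non-consecutive indices, where the elementary Darboux steps no longer connect adjacent eigenvalues and interlacing must be upgraded to the full Krein--Adler sign condition $\prod_j(k-m_j)\ge0$ for every integer $k\ge0$; verifying that the hypotheses in force preclude any combination $\phi$ from acquiring an order-$n$ real zero is where the real work lies.
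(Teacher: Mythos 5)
There is no proof in the paper to compare against: Theorem~\ref{th5} is quoted from Adler's work \cite{adler2} without proof, so your attempt must be judged on its own terms. Judged so, its most valuable feature is the doubt you voice in your final paragraph, and you should have pushed that doubt to its conclusion: the statement as written, for \emph{arbitrary} increasing indices, is false, and no bookkeeping can rescue it. Take $n=2$, $m_1=1$, $m_2=3$: then $\mathcal{W}(H_1,H_3)=H_1H_3'-H_1'H_3=2x\bigl(24x^2-12\bigr)-2\bigl(8x^3-12x\bigr)=32x^3$, which vanishes at $x=0$ although $n$ is even. What governs nonvanishing is exactly the Krein--Adler condition you name, $\prod_j(k-m_j)\ge 0$ for all integers $k\ge 0$ (it fails here at $k=2$); evenness of $n$ implies that condition only when the indices are consecutive, $m_j=m+j-1$. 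The consecutive case is the only one the paper ever uses, since the GHP \eqref{ghp} are Wronskians of consecutive Hermite polynomials, so the correct reading of Theorem~\ref{th5} carries the consecutive (or Krein--Adler) hypothesis. Your plan to ``verify that the hypotheses in force preclude'' order-$n$ real zeros for general gapped indices is therefore not merely the hard step --- it is impossible, because the hypotheses in force do not preclude them.

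In the consecutive case your argument also has a genuine gap at its center: after the Andr\'eief reduction you assert ``I would show the integral is of one sign for all real $x$'', but the factor $\prod_k(t_k+{\rm i}x)^m$ is complex, so positivity of the symmetric measure $\Delta(t)^2\prod_k {\rm e}^{-t_k^2/2}\,{\rm d}t_k$ gives no sign control, and sign-definiteness of this oscillatory integral is essentially equivalent to what you are trying to prove --- the argument is circular as it stands. The gap can, however, be closed inside your own framework by one extra ingredient, the rotation symmetry $H_{m,n}({\rm i}z)={\rm i}^{mn}H_{n,m}(z)$ stated in the paper: write $H_{m,n}(x)={\rm i}^{mn}H_{n,m}(-{\rm i}x)$ and apply your integral representation to $H_{n,m}$ at the rotated argument. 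The complex factor then becomes $\prod_{k=1}^{m}(t_k+x)^{n}$, which is pointwise nonnegative precisely because $n$ is even, so the integral is strictly positive and $H_{m,n}$ has no real zeros; this is nothing but the positivity of the degenerate-GUE partition function \eqref{gue}. With that substitution your outline becomes a complete proof of the case the paper actually needs, while the general-index statement should simply be corrected to the Krein--Adler form.
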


The typical pattern of zeros is a slightly deformed rectangular region shown in Fig.~\ref{fig1} (right). Its horizontal and vertical ranges are proportional to $\sqrt{2m+n}$ and $\sqrt{2n+m}$ respectively. Recently, the generalized Hermite polynomials have been studied in the limit $m,n\to\infty$ in a~number of papers \cite{buck,FHV, mr,nov_sch}. In the paper~\cite{mr}, the distribution of zeros of $H_{m,n}(z)$ was obtained in the asymptotic limit $m\to\infty$, $n=O(1)$. On the other hand, the paper~\cite{buck} contains an analysis of $H_{m,n}(z)$ in the limit $m,n\to\infty$, $m=rn$, $r=O(1)$ and deduces in particular bounds for the deformed rectangular region
containing the zeros. The asymptotic distribution of zeros in the latter asymptotic regime, i.e., the generalization of Plancherel--Rotach formulas to~$H_{m,n}(z)$ for~$z$ within the deformed rectangular region for $m$ and $n$ of similar large order, remains an open problem. We note that both papers \cite{buck, mr} apply methods of asymptotic ``undressing'' of Riemann--Hilbert problems.

\begin{Remark} Note that in \cite{buck} a mistake was corrected in the asymptotics found in \cite{nov_sch}. Namely, there was an incorrect leading term of the Riemann surface equation which led to genus-0 functions instead of genus-1 functions responsible for the asymptotic distribution of zeros. In turn, the poles near each vertex of the asymptotic ``rectangle'' (see Fig.~\ref{fig1} right) were found incorrectly.
\end{Remark}

We now prove that the potentials $u(z)$ provided by \eqref{f1} are non-singular on the real line.

\begin{Theorem}\label{th2} The potential $u(x) = f'_1(x)+f_1^2(x)$ is non-singular at $x\in\mathbb{R}$ if
\begin{subequations}
\begin{gather}
f_1= f^{(1)}_1, \qquad n =2k, \label{nsing1}\\
f_1= f^{(2)}_1, \qquad n =2k+1, \label{nsing2}\\
f_1= f^{(3)}_1, \qquad n =2k, \label{nsing3}
\end{gather}
\end{subequations}
and $m$ is arbitrary.
\end{Theorem}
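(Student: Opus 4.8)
The plan is to reduce the non-singularity of $u$ on $\mathbb{R}$ to the absence of real zeros of a single generalized Hermite polynomial, and then to read off the admissible parity of $n$ from Adler's theorem (Theorem~\ref{th5}). First I would locate the real poles of $u=f_1'+f_1^2$. Expanding the logarithm of the ratio in each $f_1^{(i)}$ of \eqref{f1}, one writes it as $\mp z+\sum_\ell \epsilon_\ell H_\ell'/H_\ell$ with signs $\epsilon_\ell=\pm1$, so that $f_1$ has a simple pole at every real zero of each polynomial $H_\ell$ that occurs, all such zeros being simple as noted after \eqref{recur}. Near such a zero $x_0$ one has $f_1=\epsilon_\ell(z-x_0)^{-1}+c_0+\cdots$, whence the leading coefficient of $u$ is $\epsilon_\ell^2-\epsilon_\ell=\epsilon_\ell(\epsilon_\ell-1)$: this vanishes when $\epsilon_\ell=+1$ but equals $2$ when $\epsilon_\ell=-1$. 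Thus the double pole of $u$ survives precisely at the real zeros of the polynomial that enters $f_1$ with a minus sign.

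At a zero with $\epsilon_\ell=+1$ the double pole cancels, and there remains a potential simple pole of $u$ with residue $2c_0$; its vanishing is exactly the content of the generalized Stieltjes relation of Theorem~\ref{th1}, i.e.\ $\operatorname{Res}u=0$ in the sense of Corollary~\ref{cor1}. This is the step I expect to be the genuine obstacle, as it is the only place where the special arithmetic of the PIV poles, rather than soft analysis, is required; it is, however, already encoded in the single-polynomial representations \eqref{f1a}--\eqref{f1c}, which show that the $\epsilon_\ell=+1$ zeros drop out and leave $u=z^2-2(\ln H_\bullet)''+\mathrm{const}$. Reading off the surviving polynomial in each case gives the dictionary
\begin{gather*}
f_1^{(1)}\ \longleftrightarrow\ H_{m,n}\ \eqref{f1c},\qquad
f_1^{(2)}\ \longleftrightarrow\ H_{m,n+1}\ \eqref{f1b},\qquad
f_1^{(3)}\ \longleftrightarrow\ H_{m+1,n}\ \eqref{f1a}.
\end{gather*}

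It then remains to decide when the relevant $H_\bullet$ has no real zeros, for then, being a real polynomial with simple zeros, it is nonvanishing on $\mathbb{R}$ and $u$ is smooth there. Writing each polynomial as a Wronskian of consecutive classical Hermite polynomials through \eqref{ghp}, so that $H_{m,n}$, $H_{m+1,n}$ and $H_{m,n+1}$ are Wronskians of $n$, $n$ and $n+1$ consecutive Hermite polynomials respectively, I would apply Theorem~\ref{th5}: such a Wronskian is zero-free on $\mathbb{R}$ as soon as the number of factors is even. Hence $H_{m,n}$ and $H_{m+1,n}$ are zero-free on $\mathbb{R}$ exactly when $n$ is even, while $H_{m,n+1}$ is zero-free exactly when $n+1$ is even, that is $n$ odd. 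Matching these parities against the dictionary yields non-singularity in cases \eqref{nsing1} and \eqref{nsing3} for $n=2k$, and in case \eqref{nsing2} for $n=2k+1$, with $m$ unrestricted, which is precisely the assertion of the theorem.
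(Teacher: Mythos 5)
Your argument is correct and follows the paper's own proof in all essentials: you locate the surviving double poles of $u$ via the coefficient $\epsilon_\ell^2-\epsilon_\ell$ at each simple pole of $f_1$, dispose of the residual simple poles by Corollary~\ref{cor1} (equivalently the generalized Stieltjes relation of Theorem~\ref{th1}), identify the ``dangerous'' polynomial for $f_1^{(1)}$, $f_1^{(2)}$, $f_1^{(3)}$ as $H_{m,n}$, $H_{m,n+1}$, $H_{m+1,n}$ respectively, and then invoke Adler's Theorem~\ref{th5} on Wronskians of an even number of Hermite polynomials to get the stated parities of $n$. The only cosmetic difference is your additional cross-check through the single-polynomial representations \eqref{fla-flc}, which the paper itself uses only in the remark following the theorem.
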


\begin{proof} Note that the functions \eqref{f1} $f_1$ can be written as
\begin{gather*}
f_1(z) = -z-v(z),
\end{gather*}
with $v=\phi_{j+1}$ as $f_1= f_1^{(j)}$, $j = 1,2,3$, $\phi_4=\phi_1$.

Then, by Corollary~\ref{cor1},
\begin{gather*}
f_1^2 = (z+v(z))^2 = \frac{c^2_{-1}}{(z-z_j)^2} + O(1), \qquad z \to z_j
\end{gather*}
for every pole $z=z_j$ of $v$. This yields
\begin{gather*}
u= f_1'+f_1^2 = \frac{-c_{-1} + c^2_{-1}}{(z-z_j)^2} + O(1), \qquad z \to z_j.
\end{gather*}
By Theorem~\ref{th1}, $c_{-1}=\pm 1$, and the pole of $u$ vanishes only for $c_{-1}=1$. Since all GHP have simple zeros, $H_{i,j}(z)= a(z-z_1) \cdots (z-z_{ij})$, we have
\begin{gather*}
c_{-1}=\mathop{\rm Res}\limits_{z=z_j} \frac{{\rm d}}{{\rm d}z}\ln \frac{H_{m+1,n}(z)}{H_{m,n+1}(z)} =
 \begin{cases} \hphantom{-}1, & H_{m+1,n}(z_j)=0,\\
-1, & H_{m,n+1}(z_j)=0,
\end{cases}
\end{gather*}
Thus one should eliminate real-valued zeros $z_j\in\mathbb{R}$ that provide $c_{-1} =-1$, i.e., the real-valued zeros of denominators in~\eqref{3p4}.

As for the cases \eqref{nsing1} and \eqref{nsing3}, the denominators in \eqref{3p4} are $H_{m,n}$ and $H_{m+1,n}$ respectively. If $n=2k$ those polynomials do not have real-valued zeros due to Theorem~\ref{th5}. The real-valued zeros of $H_{m,n+1}$ in this case provide residues $c_{-1} =1$, so that $u= f_1'+f_1^2$ is non-singular.

In the case \eqref{nsing2} we have $n=2k+1$ which yields the denominator $H_{m,n+1}$ to have no zeros on the real line. Here the zeros of the numerator provide residues $c_{-1} =1$, and $u= f_1'+f_1^2$ again is non-singular.
\end{proof}

\begin{Remark}The statement of Theorem \ref{th2} is in line with the representation \eqref{obl} of the poten\-tial~$u(x)$ by A.~Oblomkov \cite{oblom}. Due to Theorem~\ref{th5} the logarithmic derivatives in \eqref{fla-flc} are non-singular if $n$ is even in cases~a) and~c) and~$n$ is odd in case~b).
\end{Remark}

We now discuss the spectrum of Schr\"odinger operators \eqref{sch1} with non-singular potentials~\eqref{ric}
 \begin{gather*}
 L\psi = \lambda\psi, \qquad L=-\left(\frac{{\rm d}}{{\rm d}x} +f\right)\left(\frac{{\rm d}}{{\rm d}x} -f\right) = -\frac{{\rm d}^2}{{\rm d}x^2} + u(x).
 \end{gather*}
 All potentials $u$ are quadratic at infinity. Note that the quadratic potential itself $u(x) = x^2-1$ has discrete spectrum which is the set of even numbers, $\operatorname{Sp}\big({-}\frac{{\rm d}^2}{{\rm d}x^2} +x^2-1\big) = \{\lambda_k = 2k ,\, k = 0, 1, \ldots\}$. Actually, the GHP potentials demonstrate similar features.

\begin{Theorem}\label{th3} If the operator $L$ has potentials \eqref{ric} $u= f' +f^2$ with $f=f_1^{(j)}$, $j=1,2,3$ in~\eqref{f1}, then its spectrum is discrete and consists of even numbers with the exception or addition of a~finite number of terms
 \begin{subequations}
 \begin{gather}
\operatorname{Sp}(L) = \{\lambda = 2k, \, k = 0, 1, \ldots, m, m+n+1,\ldots \}, \label{speca}\\
\operatorname{Sp}(L) = \{\lambda = 2k, \, k = -m, -m+1, \ldots, 0, n+1, n+2,\ldots\}, \label{specb}\\
\operatorname{Sp}(L) = \{\lambda = 2k, \, k = -m-n, -m-n+1, \ldots,-n-1, 0, 1, \ldots\}. \label{specc}
 \end{gather}
 \end{subequations}
\end{Theorem}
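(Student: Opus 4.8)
The plan is to realize each operator $L$ as a Darboux--Crum dressing of the harmonic oscillator $L_0=-{\rm d}^2/{\rm d}x^2+x^2-1$, whose spectrum $\{2k:k\ge 0\}$ and eigenfunctions $\psi_k(x)=H_k(x){\rm e}^{-x^2/2}$ are known, and then to read off the spectrum from the classical rule that such a transformation deletes the eigenvalues of its decaying seed solutions and inserts the eigenvalues of its growing ones. First I record the two relevant seed families for $L_0$: the physical states $\psi_k$, with eigenvalue $2k$, and the growing ``virtual'' states $\chi_k(x)=H_k({\rm i}x){\rm e}^{x^2/2}$, for which a direct substitution into $L_0$ gives $L_0\chi_k=-2(k+1)\chi_k$. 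Both are meromorphic solutions, so the dressing stays inside the monodromy-free class of Section~\ref{sect1}.

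Next I identify, in each case, the seed index set whose Wronskian reproduces the polynomial and the additive constant in \eqref{f1a}--\eqref{f1c}. Using \eqref{ghp}, the factorization $\mathcal W(g f_1,\ldots,g f_r)=g^{\,r}\mathcal W(f_1,\ldots,f_r)$, and the symmetry $H_{m,n}({\rm i}x)={\rm i}^{mn}H_{n,m}(x)$, one checks the following. Case \eqref{nsing3} is the pure state-adding transformation with seeds $\chi_n,\ldots,\chi_{n+m-1}$, for which $\mathcal W\propto {\rm e}^{mx^2/2}H_{m,n}(x)$, so that $u=x^2-2m-1-2(\ln H_{m,n})''$, matching \eqref{f1c}. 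Case \eqref{nsing1} is the pure state-deleting transformation with the consecutive seeds $\psi_{m+1},\ldots,\psi_{m+n}$, for which $\mathcal W\propto {\rm e}^{-nx^2/2}H_{m+1,n}(x)$, matching \eqref{f1a}. Case \eqref{nsing2} is the mixed transformation with seeds $\{\chi_0,\ldots,\chi_{m-1}\}\cup\{\psi_1,\ldots,\psi_n\}$, whose Wronskian carries the common exponential ${\rm e}^{(m-n)x^2/2}$ (each determinant term picks one entry from every column) and whose polynomial part reduces to $H_{m,n+1}$ as in \eqref{f1b}.

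With the seeds fixed, the spectrum follows from the intertwining/Crum theorem: the surviving oscillator levels map to $L_2(\mathbb R)$ eigenfunctions via the Crum intertwiner, while each growing seed $\chi_k$ contributes a new normalizable state, given by a Wronskian ratio that decays like ${\rm e}^{-x^2/2}$ at infinity, at the eigenvalue $-2(k+1)$. Deleting $\{2(m+1),\ldots,2(m+n)\}$ then yields \eqref{speca}; adding $\{-2(n+1),\ldots,-2(n+m)\}$ yields \eqref{specc}; adding $\{-2,\ldots,-2m\}$ and deleting $\{2,\ldots,2n\}$ yields \eqref{specb}. Since $u(x)=x^2+O(1)$ at infinity, the spectrum is purely discrete, so this enumeration is complete.

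The main obstacle is the state-adding half of the argument. For the decaying seeds the deletion is classical Crum theory, but for the growing virtual seeds I must prove that the transformed functions genuinely lie in $L_2(\mathbb R)$ and that no extra eigenvalue is created or lost, i.e.\ completeness of the new eigenbasis; this rests on controlling the large-$x$ asymptotics of the Wronskian ratios together with the regularity of the transformation. Regularity is exactly where Theorems~\ref{th2} and~\ref{th5} enter: the relevant seed Wronskian is proportional to $H_{m,n}$, $H_{m+1,n}$, or $H_{m,n+1}$, and the parity conditions $n$ even (cases a, c) and $n$ odd (case b) are precisely those making these polynomials nodeless on $\mathbb R$, so the dressed potential has no real poles and the virtual-state transformation is nonsingular. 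The one delicate point is the mixed case~\eqref{specb}, where matching the additive constant in \eqref{f1b} and confirming that exactly $m$ levels are added and $n$ deleted requires the careful Wronskian identity above rather than a naive count.
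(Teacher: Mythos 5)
Your cases \eqref{speca} and \eqref{specc} are sound and essentially coincide with the paper's treatment: case~(a) is exactly the Crum state-deleting argument given in the paper's proof, and case~(c) is the virtual-state construction ($\chi_k(x)=H_k({\rm i}x){\rm e}^{x^2/2}$ with $L_0\chi_k=-2(k+1)\chi_k$) that the paper carries out in Section~\ref{sect4} around \eqref{cr3}; your version is in fact more explicit than the paper's one-line ``constant shift'' remark for that case, and your identification of the regularity conditions with Theorem~\ref{th5} is correct.

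The gap is in case~(b). The Wronskian identity you rely on, $\mathcal W(\chi_0,\ldots,\chi_{m-1},\psi_1,\ldots,\psi_n)\propto{\rm e}^{(m-n)x^2/2}H_{m,n+1}$, is false: already for $m=n=1$ one computes $\mathcal W\big({\rm e}^{x^2/2},\,2x{\rm e}^{-x^2/2}\big)=2-4x^2\propto H_2(x)$, which has real zeros and is not proportional to $H_{1,2}=4x^2+2$; and even granting the polynomial, the additive constant would come out as $2n-2m-1$ rather than the $2n-2m+1$ of \eqref{f1b}. The correct mixed realization must also delete the ground state: with seeds $\{\chi_0,\ldots,\chi_{m-1}\}\cup\{\psi_0,\psi_1,\ldots,\psi_n\}$ one checks (e.g., for $m=n=1$) that $\mathcal W\big({\rm e}^{x^2/2},{\rm e}^{-x^2/2},2x{\rm e}^{-x^2/2}\big)={\rm e}^{-x^2/2}(8x^2+4)\propto{\rm e}^{(m-n-1)x^2/2}H_{m,n+1}$, which reproduces \eqref{f1b} exactly. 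But then the level $0$ is deleted together with $2,\ldots,2n$, so the resulting spectrum is $\{2k\colon k=-m,\ldots,-1\}\cup\{2k\colon k\ge n+1\}$; the printed formula \eqref{specb} contains a spurious $k=0$, and the same conclusion follows from the paper's own route, since \eqref{f1b} is the $(n+1)$-fold deletion of the levels $2m,\ldots,2(m+n)$ shifted by the constant $-2m$. In other words, your seed choice was reverse-engineered to match a formula that is itself off by one level: for $m=n=1$ the potential $x^2+1-2(\ln(2x^2+1))''$ has the eigenfunction ${\rm e}^{-x^2/2}/(2x^2+1)$ at $\lambda=-2$ and no eigenvalue at $\lambda=0$. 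To repair your argument you should include $\psi_0$ among the seeds, prove the resulting mixed (pseudo-)Wronskian identity for general $m$, $n$, and correct the target spectrum accordingly.
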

\begin{proof} Since all potentials \eqref{ric} came from iterations of Darboux transformations by the dres\-sing chain \eqref{drchain}, their spectra are computed explicitly by M.~Crum's method~\cite{crum}. Namely, if the potential $u_1$ has discrete spectrum with eigenfunctions $\psi_{k}$ corresponding to distinct $\lambda_k$, $k=1,2, \ldots$, then $n$-th Darboux iteration $u_{n}$ has the form
\begin{gather*}
u_{n}(z) = u_1(z) -2\frac{{\rm d}^2}{{\rm d}z^2} \ln\mathcal{W}(\psi_{1}(z), \ldots, \psi_{n}(z)),
\end{gather*}
 and its eigenfunctions are
 \begin{gather}\label{cr1}
 \psi_{n,k}= \frac{\mathcal{W}(\psi_{1}, \ldots, \psi_{n}, \psi_{k})}{\mathcal{W}(\psi_{1}, \ldots, \psi_{n})}, \qquad k > n.
 \end{gather}
 Here $n$ should be even, else the denominator in \eqref{cr1} has real-valued zeroes and $\psi_{n,k}$ have non-integrable singularities. Since $\psi_1, \ldots, \psi_n$ are no longer eigenfunctions of new potential $u_n$, the
 spectrum of $u_n$ coincides with $\operatorname{Sp}\big({-}\frac{{\rm d}^2}{{\rm d}x^2} +u_1\big)\setminus \{\lambda_1, \ldots, \lambda_n \}$.

 In particular, put $u_1 = x^2-1$ and $\psi_1(x) = {\rm e}^{-x^2/2} H_{m+1}(x), \ldots,\psi_n(x) = {\rm e}^{-x^2/2} H_{m+n}(x)$, where $H_k$ are classical Hermite polynomials. Then
\begin{gather*}
u_n(z) = z^2 - 2\frac{{\rm d}^2}{{\rm d}z^2} \ln\mathcal{W}(H_{m+1}(z), \ldots, H_{m+n}) +2n-1 \\
\hphantom{u_n(z)}{} = z^2 - 2\frac{{\rm d}^2}{{\rm d}z^2} \ln H_{m+1,n}(z) +2n-1,
\end{gather*}
which follows from the representation \eqref{f1a}. Since $\operatorname{Sp}\big({-}\frac{{\rm d}^2}{{\rm d}x^2} +x^2-1\big) = 2\mathbb{N}$ and $\lambda_k = 2(m+k)$, we come to formula~\eqref{speca}.

The other two cases \eqref{specb} and \eqref{specc} are proved in a similar way. The spectral gap for the poten\-tial~\eqref{f1b} is $\{2m, 2(m+n)\}$ and the constant shift is $-2m$ with respect to~\eqref{f1a}. As for~\eqref{f1c}, the gap is $\{2m, 2(m+n-1)\}$ and the constant shift is $-2m-2n$. Applying Crum's formulas this gives spectra~\eqref{specb} and~\eqref{specc}. \end{proof}

Note that the case $f=f_1^{(1)}$ and the corresponding spectrum~\eqref{speca} was first found in the paper \cite{adler2} by V.E.~Adler. In the next section we reproduce Theorem~\ref{th3} by the Darboux dressing procedure of the harmonic potential.

\section{1D SUSY quantum mechanics and the PIV equation}\label{sect4}

The idea to factorize quantum Hamiltonians and get supersymmetric potentials dates back to the pioneering paper by E.~Witten \cite{W}. Later came examples of one-dimensional realizations of this idea with the simplest polynomial Heisenberg algebras. A~connection between the harmonic oscillator and supersymmetric (SUSY) partner potentials generated by this algebra has been long known. Recently these potentials were identified with rational solutions of the PIV equation. Here we follow the papers by D.~Bermudes and D.J.~Fern\'andez~C.~\cite{BF1,BF2} describing this application.

Starting with two Schr\"odinger operators \eqref{sch}, say $L_j$ and $L_{j+1}$, one can factorize them as shown in Section~\ref{sect1}
\begin{gather*}
L_j = A_j^+A_j^- + \epsilon_j, \qquad L_{j+1} = A_{j}^-A_{j}^+ + \epsilon_j,
\end{gather*}
where
\begin{gather}\label{aop}
A_j^+ = \frac{{\rm d}}{{\rm d}x} + f_j(x), \qquad A_j^- = \frac{{\rm d}}{{\rm d}x} - f_j(x)
\end{gather}
and $u_j \mapsto u_j + \epsilon_j$, $\alpha_j = 2(\epsilon_j - \epsilon_{j+1})$ in \eqref{ric}, \eqref{drchain}.

Then an intertwining relation holds, namely, $L_{j+1}A_j^+ = A_j^+L_j$ which generates the $k$-th order intertwining operators
\begin{gather}
L_k B_k^+ = B_k^+L_1, \qquad L_1 B_k^- = B_k^-L_k, 
\qquad B_k^+=A_k^+ \cdots A_1^+, \qquad B_k^-=A_1^- \cdots A_k^-. \label{inter2}
\end{gather}
This represents the standard SUSY algebra
\begin{gather*}
 [\mathcal{Q}_a, \mathcal{H}] = 0, \qquad \{\mathcal{Q}_a, \mathcal{Q}_b\} = \delta_{ab}\mathcal{H}, \qquad a,b =1,2, \\
\mathcal{Q}_1= \frac{1}{\sqrt{2}}\begin{pmatrix}0 & B_k^+ \\ B_k^-& 0\end{pmatrix}, \qquad
\mathcal{Q}_2= \frac{1}{i\sqrt{2}}\begin{pmatrix}0 & B_k^+ \\ -B_k^-& 0\end{pmatrix}, \qquad
\mathcal{H}= \frac{1}{\sqrt{2}}\begin{pmatrix} B_k^+B_k^-& 0 \\ 0 & B_k^-B_k^+ \end{pmatrix}.
\end{gather*}
where $\{ \, , \}$ is the anticommutator and $\mathcal{H}$ is the Hamiltonian with superpotential partner $u_k=u_1 + 2(f_1+ \dots + f_k)'$ of the initial potential~$u_1$~\cite{BF1}.

Similarly, the polynomial Heisenberg algebra for the Hamiltonian \eqref{sch} is formed by the dressing chain operators
\begin{gather*}
\big[L,B_k^\pm\big]=\pm B_k^\pm, \qquad \big[B_k^-,B_k^+\big]=N_k(L+I) - N_k(L), \qquad N_k(L)=\prod_{j=1}^k(L-\epsilon_j).
\end{gather*}
For $k=3$ this algebra can produce new solutions of the PIV equation starting from the known ones (see \cite{BF2}). Taking a closure condition $L_4=L_1-I$ we come to the dressing chain \eqref{adl} with some~$\alpha_1$,~$\alpha_2$ and $\alpha_3$. An example is given at the end of this section.

First we start with the trivial solution of the ``$-2x$ hierarchy'' \eqref{hier}, which corresponds to the harmonic potential $u_0(x) = x^2 {-1}$. The eigenfunctions of the Schr\"odinger operator with the harmonic potential
\begin{gather*}
L\psi_{0}= {2}\epsilon\psi_{0}
\end{gather*}
are written explicitly
\begin{gather} \label{wh}
\psi_{0}(x) =\sqrt{\pi} \big((1- c + {\rm i}(1+c){\rm e}^{{\rm i}\pi\epsilon}\big)D_{\epsilon -1}\big(\sqrt{2} x\big)
 - 2(1+c){\rm e}^{{\rm i}\pi\epsilon /2}\sin(\pi\epsilon) \Gamma(\epsilon) D_{-\epsilon }\big({\rm i}\sqrt{2} x\big),\!\!\!
 \end{gather}
where $D_\nu(x)$ is the Weber--Hermite function, $\Gamma$ is the gamma function and $c$ is an arbitrary complex-valued constant.

Apply now the ladder operators \eqref{inter2} to the basic eigenfunction \eqref{wh}. The first SUSY partner potential for $k=1$ to $u_0(x) = x^2 {-1}$ has the form
\begin{gather*}
u_1(x) = x^2 - {2}\frac{{\rm d}^2}{{\rm d}x^2}\ln \psi_0(x) {-1}.
\end{gather*}
Similarly, the $k$-th potential becomes \cite{BF1}
\begin{gather}\label{spartner}
u_k(x) = x^2 - {2}\frac{{\rm d}^2}{{\rm d}x^2}\ln \mathcal{W} (\psi_{01}, \psi_{02}, \ldots , \psi_{0 {k}} ) {-1},
\end{gather}
where $\mathcal{W}$ is the Wronskian of eigenfunctions \eqref{wh} $\psi_{0j}(x)$ with parameters $\epsilon = \epsilon_j$.

This gives a way to construct a set of rational potentials associated with GHP. Take para\-me\-ters~$\epsilon_j$ to be integers, because in this case the Weber--Hermite functions in \eqref{wh} become Hermite polynomials multiplied by the exponential ${\rm e}^{-x^2/2}$. Namely, if we put
\begin{gather*}
\epsilon_1 = m+ 1, \qquad \epsilon_j=m+j, \qquad c_j =-1,\qquad j=2,3, \ldots, n, \qquad m,n \in \mathbb{N},
\end{gather*}
the second term in formula \eqref{wh} vanishes and the potential \eqref{spartner} takes the form
\begin{gather*}
 u_n(x) = x^2 - 2\frac{{\rm d}^2}{{\rm d}x^2}\ln \mathcal{W}\left(H_m(x), H_{m+1}(x), \ldots , H_{m+n}(x) \right) +2n -1\\
\hphantom{u_n(x)}{} = x^2 - 2\frac{{\rm d}^2}{{\rm d}x^2}\ln H_{m,n+1}(x) + 2n-1.
\end{gather*}
Unfortunately, this choice of $\psi_0$ leads to singularities in the intermediate eigenfuntions \eqref{cr1} of the potential \eqref{spartner}
\begin{gather}\label{cr2}
\psi_{k,j} = \frac{ \mathcal{W} (\psi_{01}, \psi_{02}, \ldots , \psi_{0 j-1} )}{ \mathcal{W}\left(\psi_{01}, \psi_{02}, \ldots , \psi_{0 j}\right)}, \qquad j=2,3, \ldots, k.
\end{gather}
To avoid this and make the formal dressing procedure correct, we use the following theorem proved by D.~Bermudez and D.J.~Fern\'andez~C.\ in~\cite{BF2} providing non-singularity of eigenfunctions.

\begin{Theorem}\label{th4} Schr\"odinger operators \eqref{sch} with potentials \eqref{spartner} are mo\-no\-dromy-free. All potentials \eqref{spartner} and eigenfunctions \eqref{cr2} are non-singular for $x\in \mathbb{R}$ if the dressing parameters satisfy the conditions
\begin{gather}
\epsilon_k < \epsilon_{k-1} < \dots < \epsilon_1 < 0, \label{epsj}\\
|c_{2j}| <1, \qquad |c_{2j+1}| > 1, \qquad j =0, 1, \ldots , k, \label{cj}
\end{gather}
where $\epsilon_j = \epsilon$ and $c_j = c$ in \eqref{wh}.
\end{Theorem}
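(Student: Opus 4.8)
The plan is to split the statement into the monodromy-free assertion, which is formal, and regularity on $\mathbb{R}$, which carries all the analytic content and reduces to non-vanishing of the Wronskians standing in the denominators of \eqref{spartner} and \eqref{cr2}.

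For the monodromy-free property I argue directly from the construction. The base operator $-{\rm d}^2/{\rm d}x^2 + x^2 - 1$ has the Weber equation as its eigenvalue problem, and the parabolic cylinder functions $D_\nu$ are entire in $x$ for every index, so the harmonic oscillator is monodromy-free. Every operator with potential \eqref{spartner} arises from it by $k$ successive Darboux transformations, and Crum's formula \eqref{cr1} writes each solution of $L_k\psi=\lambda\psi$ as a ratio of Wronskians of entire functions, hence as a meromorphic function on all of $\mathbb{C}$. Thus $L_k$ is monodromy-free regardless of the conditions \eqref{epsj}--\eqref{cj}.

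The substantive claim is regularity on the real line. Since both \eqref{spartner} and \eqref{cr2} carry $\mathcal{W}(\psi_{01},\ldots,\psi_{0j})$ in the denominator, it suffices to show
\[
\mathcal{W}(\psi_{01},\ldots,\psi_{0j})(x)\ne 0,\qquad x\in\mathbb{R},\quad 1\le j\le k .
\]
I would prove this by induction on $j$, reading the passage from $L_{j-1}$ to $L_j$ as a single Darboux step whose transformation function is the solution of $L_{j-1}$ at energy $E_j=2\epsilon_j+1$, equal by Crum's rule to $\mathcal{W}(\psi_{01},\ldots,\psi_{0j})/\mathcal{W}(\psi_{01},\ldots,\psi_{0,j-1})$. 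The step yields a regular potential and regular eigenfunctions \eqref{cr2} exactly when this function is nodeless on $\mathbb{R}$; granting the induction hypothesis that $\mathcal{W}(\psi_{01},\ldots,\psi_{0,j-1})$ is already nodeless, this is equivalent to nodelessness of $\mathcal{W}(\psi_{01},\ldots,\psi_{0j})$, which is the assertion for $j$. To locate the nodes I use the asymptotics of the two parabolic cylinder functions in \eqref{wh}: as $x\to+\infty$ the term $D_{\epsilon_j-1}(\sqrt2\,x)$ supplies the recessive branch $\sim{\rm e}^{-x^2/2}$ and $D_{-\epsilon_j}({\rm i}\sqrt2\,x)$ the dominant branch $\sim{\rm e}^{x^2/2}$, the roles reversing at $-\infty$, and the constant $c_j$ fixes their relative weight. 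Condition \eqref{epsj} places every energy $E_j$ strictly below the oscillator ground state, so each seed is of one sign near either infinity and has only finitely many zeros; after each nodeless step the new ground state drops to $E_{j-1}$, so $E_j$ stays below the bottom of $\operatorname{Sp}(L_{j-1})$. The alternation $|c_{2j}|<1$, $|c_{2j+1}|>1$ then selects, at each step, the branch that keeps the transformation function of one sign at both ends, while the identity $\mathcal{W}(\psi,\chi)'=(E_\psi-E_\chi)\psi\chi$ and its Crum generalization, combined with the strict ordering \eqref{epsj}, fix the sign of the Wronskian's derivative; a Sturm comparison against $L_{j-1}$ then excludes interior zeros and closes the induction.

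I expect the main obstacle to be precisely this asymptotic bookkeeping. One must verify that the alternating pattern $|c_{2j}|<1$, $|c_{2j+1}|>1$ --- rather than a uniform or reversed one --- is what synchronizes the dominant branch at $+\infty$ with that at $-\infty$ as the Wronskian order increases by one, since each Darboux step effectively exchanges the two asymptotic regimes. Matching the parity of $j$ to the sign of the leading exponential of the transformation function at each infinity, and then ruling out interior nodes by oscillation, is where the genuine work lies; the monodromy-free part and the reduction to Wronskian non-vanishing are comparatively routine.
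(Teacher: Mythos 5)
Your proposal follows essentially the same route as the paper: monodromy-freeness from the entirety of the Weber--Hermite functions together with the finiteness of the Darboux chain, and regularity on $\mathbb{R}$ by induction on the order of the Wronskian, using $\epsilon_j<0$ to keep each factorization energy below the current ground state and the alternating bounds on $|c_j|$ to keep each transformation function nodeless. The paper's printed proof is in fact terser than yours --- it only checks that the first seed \eqref{wh} has no real zeros and defers the induction to \cite{BF1} --- so your outline supplies more of the actual argument than the paper does, while being the same argument in structure.
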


\begin{proof} Since the Weber--Hermite functions $D_\nu(z) $ in \eqref{wh} are entire functions, so is the Wrons\-kian~\eqref{spartner}. Thus its logarithmic derivative is a~meromorphic function. Moreover, any solution~$\psi(z,\lambda)$ of the Schr\"odinger equation with potential~\eqref{spartner} can be found via a~{\em finite} dressing chain of~$k$ operators~$A_j^\pm$~\eqref{aop}, i.e., a~finite number of Darboux transformations. This yields trivial monodromy of~$\psi(z,\lambda)$.

It is easy to check that the function \eqref{wh} has no real-valued zeros if $\epsilon < 0$ and $|c| < 1$. The remaining inequalities \eqref{epsj} and \eqref{cj} are proved by induction (see~\cite{BF1} for details). \end{proof}

Apply now Theorem \ref{th4} to get non-singular functions \eqref{cr2}. Take parameters $\epsilon_j$ and $c_j$ in the form
\begin{gather*}
\epsilon_1 = -n, \!\!\qquad \epsilon_j=-n-j,\!\! \qquad c_{2j} =0,\!\!\qquad c_{2j-1}=\infty,\!\! \qquad j=1,2, \ldots, m-1,\!\! \qquad m,n \in \mathbb{N}.
\end{gather*}
In the case of odd $j$ the eigenfunctions \eqref{wh} are normalized as $\psi_{0j}=D_{-\epsilon}\big({\rm i}\sqrt{2} x\big)$. This forces the first term in~\eqref{wh} to be zero while the second term turns into the Hermite polynomial ${\rm e}^{x^2/2}H_{n+j}({\rm i}x)$. Thus the potential~\eqref{spartner} becomes
\begin{gather}
 u_n(x) = x^2 - 2\frac{{\rm d}^2}{{\rm d}x^2}\ln \mathcal{W} (H_n({\rm i}x), H_{n+1}({\rm i}x), \ldots , H_{n+m-1}({\rm i}x)) -2m-1 \nonumber\\
 \hphantom{u_n(x)}{} = x^2 - 2\frac{{\rm d}^2}{{\rm d}x^2}\ln H_{m,n}(x) - 2m-1, \label{cr3}
\end{gather}
because $H_{k,j}({\rm i}x)={\rm i}^{kj}H_{j,k}(x)$.

As shown in \cite{BF2}, the spectrum of the potential \eqref{cr3} is
\begin{gather*}
\operatorname{Sp}\left(-\frac{{\rm d}^2}{{\rm d}x^2}+u_n\right)= \{2\epsilon_{m-1}-2, \ldots, 2\epsilon_1-2\}\cup \{0, 2, 4, \ldots\},
\end{gather*}
which coincides with formula \eqref{f1c} found in Theorem \ref{th3}.

Finally we illustrate the dressing of the ``$-2x$ hierarchy'' by the ladder operators \eqref{aop},
 \eqref{inter2} with singular eigenfunctions for the case $k=3$
\begin{gather}\label{ladder3}
B_3^\pm = \left(\frac{{\rm d}}{{\rm d}x} \pm f_1\right)\left(\frac{{\rm d}}{{\rm d}x} \pm f_2\right)\left(\frac{{\rm d}}{{\rm d}x} \pm f_3\right).
\end{gather}
 The closure condition $L_4 = L_1 - I$ leads to the system
\begin{gather*}
f_1' + f_2' = f_1^2 - f_2^2 - 2(\epsilon_1 - \epsilon_2), \\
f_2' + f_3' = f_2^2 - f_3^2 - 2(\epsilon_2 - \epsilon_3), \\ 
f_3' + f_1' = f_3^2 - f_1^2 - 2(\epsilon_3 - \epsilon_1 {-1}),
\end{gather*}
 which is equivalent to the sPIV system \eqref{adl}, where
 \begin{gather*}
 \alpha_1 = 2(\epsilon_1 - \epsilon_2), \qquad \alpha_2 = 2(\epsilon_2 - \epsilon_3), \qquad \alpha_3 = 2(\epsilon_3 - \epsilon_1 {-1}).
 \end{gather*}

\begin{figure}[t]\centering
\includegraphics[width=70mm]{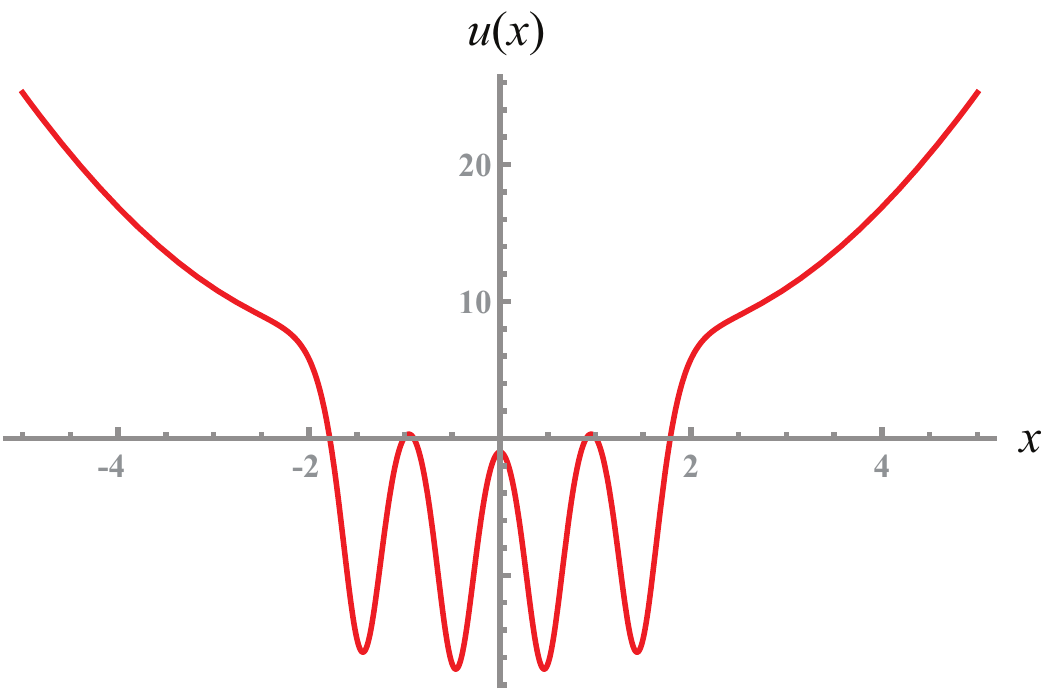} \qquad\includegraphics[width=70mm]{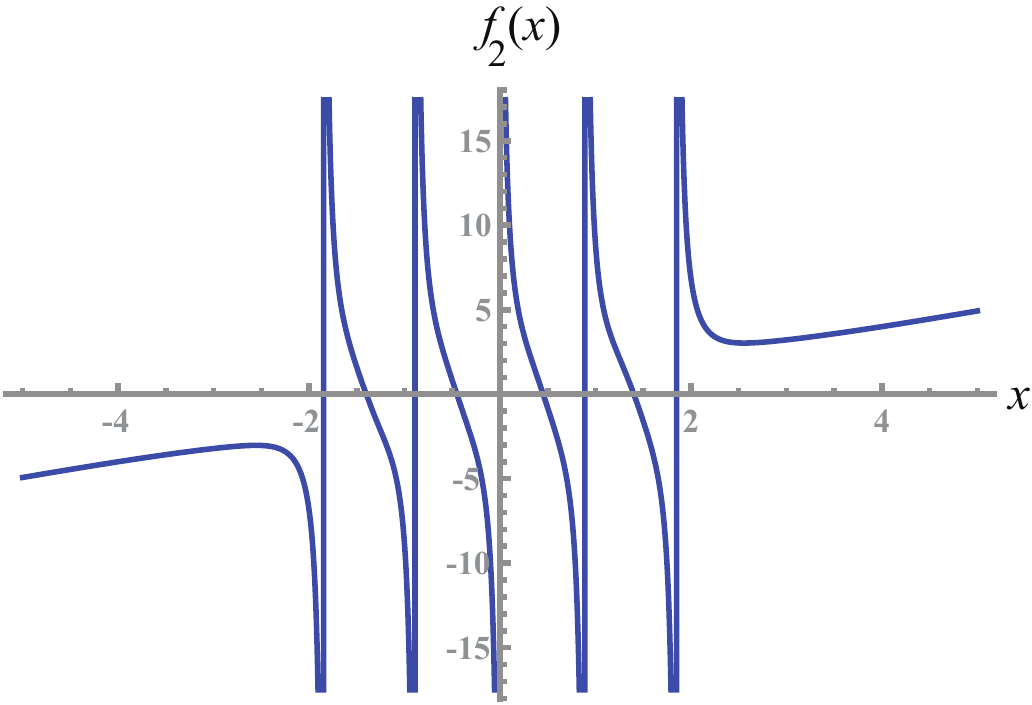}

\caption{Non-singular potential $u(x)=f_2'(x)+f^2_2(x)$ generated by ladder operator~\eqref{ladder3} (left) and its singular component~\eqref{lf2} $f_2(x)$.} \label{fig2}
\end{figure}

 Choosing {integer} values of $\epsilon_1$, $\epsilon_2$ and $\epsilon_3$ it is easy to reproduce the generalized Hermite polynomials which form rational solutions of PIV~\eqref{3p4}. For example, the choice
 \begin{gather*}
 {\epsilon_3 = -3, \qquad \epsilon_1 = 1, \qquad \epsilon_2 = 4,}
 \end{gather*}
yields $m=4$ and $n=3$. This provides a~rational solution of PIV in the form \eqref{3p4}
 \begin{gather*}
 \phi_2(x) = \frac{{\rm d}}{{\rm d}x}\ln \frac{H_{4,3}(x)}{H_{4,4}(x)} =\frac{64 x^3 \big(7875 - 900 x^4 + 720 x^8 + 64 x^12\big)}{23625 + 16 x^4 \big(7875 - 450 x^4 + 16 x^8 \big(15 + x^4\big)\big)} \\
\hphantom{\phi_2(x) =}{} +\frac{24 x \big({-}225 + 2 x^2 \big({-}75 - 60 x^2 + 120 x^4 - 40 x^6 + 16 x^8\big)\big)}{ 675 + 4 x^2 \big({-}675 - 225 x^2 + 4 x^4 \big({-}30 + 45 x^2 - 12 x^4 + 4 x^6\big)\big)}.
\end{gather*}
Since $n$ is odd, by Theorem \ref{th2} one should choose a non-singular potential from the func\-tion~\eqref{f1}, \eqref{nsing2}
 \begin{gather}
 f_2(x) = f_1^{(2)}=x + \frac{{\rm d}}{{\rm d}x}\ln \frac{H_{5,3}(x)}{H_{4,4}(x)} \nonumber\\
 =x+\frac{1}{x}-\frac{64 x^3 \big(7875-900 x^4+720 x^8+64 x^{12}\big)}{23625+16 x^4 \big(7875+2 x^4 \big({-}225+8 x^4 \big(15+x^4\big)\big)\big)}\label{lf2}\\
\hphantom{=}{} +\frac{4 x \big({-}7875+4 x^2 \big({-}4725+2025 x^2-4200 x^4+2700 x^6-720 x^8+112 x^{10}\big)\big)}{23625+2 x^2 \big({-}7875+2 x^2 \big({-}4725+2 x^2 \big(675+2 x^2 \big({-}525+270 x^2-60 x^4+8 x^6\big)\big)\big)\big)}.\nonumber
\end{gather}
The non-singular potential has the form
\begin{gather*}
u(x) = f_2'(x) + f_2^2(x)=x^2-1 + R(x),
\end{gather*}
where $R$ is rational function, $R(x) = O\big(x^{-2}\big)$, $x\to\infty$. The functions $u$ and $f_2$ are plotted in Fig.~\ref{fig2}. The spectrum of the Schr\"odinger operator $L$ with this potential is
\begin{gather*}
\operatorname{Sp}(L)= \{-8,-6,-4,-2\}\cup\{8,10,12, \dots\}.
\end{gather*}

\subsection*{Acknowledgements} The work has been supported by Russian Scientific Foundation grant 17-11-01004. The author also is grateful to the referee remarks which helped to improve the paper.

\pdfbookmark[1]{References}{ref}
\LastPageEnding

\end{document}